\documentclass[letterpaper,10pt,conference]{ieeeconf}

\IEEEoverridecommandlockouts                              % This command is only
                                                          % needed if you want to
                                                          % use the \thanks command
%\overrideIEEEmargins
% See the \addtolength command later in the file to balance the column lengths
% on the last page of the document

%\documentclass[a4paper]{article}      
%\usepackage[normalem]{ulem}
%\usepackage{amsmath}
%\usepackage[normalem]{ulem}
%\usepackage{geometry}
%\usepackage{epstopdf}
%\usepackage{fancyhdr}
%\usepackage[bmargin=30mm, lmargin=20mm, rmargin=20mm]{geometry} %%tmargin=20mm,bmargin=20mm
%\usepackage{xspace}
%\usepackage{epsfig}
%\usepackage{multicol}
%\usepackage[round]{natbib}
%\usepackage[dvips]{graphicx} 
%%\DeclareGraphicsExtensions{.bmp, .png, .pdf, .jpg}
\usepackage{graphicx}
\usepackage{enumerate}
\usepackage{times}
\usepackage{color}
\usepackage{subfig}
\usepackage{amsmath,amssymb}

\DeclareMathOperator{\img}{\mathrm{Im}}
\newcommand{\setU}{\mathbb{U}}
\newcommand{\diag}{\mathrm{diag}}
\newcommand{\id}{\mathrm{I}}
\newcommand{\R}{\mathbb{R}}
\newcommand{\Sb}{\mathcal{S}}
\newcommand{\spn}{\mathrm{span}}
\newcommand{\D}{\displaystyle}

\def\PWM{\mathbb{PWM}}
\def\II{\textsc{ii}}
\def\I{\textsc{i}}

\def\inw{\mathrm{in}}
\def\srad{\boldsymbol{\rho}}

\newtheorem{theorem}{Theorem}
\newtheorem{lem}{Lemma}
\newtheorem{defn}{Definition}
\newtheorem{cor}{Corollary}

\title{\LARGE \bf
Large-signal stability conditions for semi-quasi-Z-source inverters:\\
switched and averaged models}

\author{Hernan Haimovich$^1$, Richard H. Middleton$^2$, and Lisandro De Nicol\'o$^1$%
\thanks{$^1$H. Haimovich and L. De Nicol\'o are with CIFASIS-CONICET and Depto. de Control, Esc. de Ing. Electr\'onica, FCEIA, Univ. Nac. de Rosario, Argentina}%
\thanks{$^2$R.H. Middleton is with the Centre for Complex Dynamic Systems and Control,
The University of Newcastle, Callaghan, NSW 2308, Australia}}

%\date{\today}

\begin{document}

\maketitle
\thispagestyle{empty}
\pagestyle{empty}

\begin{abstract}
 % In this work is presented a stability analysis for the averaged model of the semi-quasi-Z-source inverter. This inverter is introduced in \cite{5759746} and, in order to obtain a sinusoidal output voltage, a modified SPWM method is described and applied. So, according to steady state equation (obtained with a duty cycle constant) a duty cycle is proposed. Then no stability analysis is shown. In order to do this we analyse the averaged model of the inverter and proposed a theorem that permit find, in an easy manner, a quadratic Lyapunov function necessary to show stability . Next is shown that ensuring the boundedness for averaged model is not a trivial problem.  We provide results that ensure the boundedness and global ultimate boundedness of the averaged model state trajectories uniformly over every possible evolution of the duty cycle, provided.
  The recently introduced semi-quasi-Z-source inverter can be interpreted as a DC-DC converter whose input-output voltage gain may take any value in the interval $(-\infty,1)$ depending on the applied duty cycle. In order to generate a sinusoidal voltage waveform at the output of this converter, a time-varying duty cycle needs to be applied. Application of a time-varying duty cycle that produces large-signal behavior requires careful consideration of stability issues. This paper provides stability results for both the large-signal averaged and the switched models of the semi-quasi-Z-source inverter operating in continuous conduction mode. We show that if the load is linear and purely resistive then the boundedness and ultimate boundedness of the state trajectories is guaranteed provided some reasonable operation conditions are ensured. These conditions amount to keeping the duty cycle away from the extreme values $0$ or $1$ (averaged and switched models), and limiting the maximum PWM switching period (switched model). The results obtained can be used to give theoretical justification to the inverter operation strategy recently proposed by Cao et al. in \cite{5759746}.% Boundedness of trajectories is established for both the averaged and the switched model of the inverter.  for duty cycle evolution, provided the applied duty cycle remains within the range $[\varepsilon,1-\varepsilon]$ at all times, for some $\varepsilon>0$ arbitrarily small. We also comment on how this stability proof could be extended to nonlinear purely resistive loads and to other more general loads.
\end{abstract}

\section{Introduction}
\label{seccIntroduccion}

The semi-quasi-Z-source inverter introduced in \cite{5759746} is a single-phase single-stage low-cost (only two active components) transformerless inverter whose input and output terminals share the same ground. This semi-quasi-Z-source inverter is especially suited for renewable-energy distributed-generation photovoltaic applications and its name derives from the Z-source inverter \cite{peng_tia03,tanxie_tpe11} because it also contains an LC network, the distinguishing feature of the Z-source inverter. However, the shoot-through state responsible for the boost capability of the Z-source inverter is not applicable to the semi-quasi-Z-source inverter and hence the principle of operation of the latter inverter is quite different.

The semi-quasi-Z-source inverter contains two active components (such as IGBTs or MOSFETs) which conduct in a complementary manner during normal operation (continuous conduction mode, CCM). The steady-state input-output voltage gain of this inverter can theoretically be any value in the interval $(-\infty,1)$, provided the circuit is operated at a constant duty cycle with a sufficiently high switching frequency. To obtain a sinusoidal voltage waveform at the output of the inverter, \cite{5759746} asserts that if the frequency of the desired output sine wave is small enough, then the steady-state input-output gain equation would be approximately valid at every time instant (after a possible initial transient) and hence the required time-varying duty cycle can be deduced from this gain equation. This inverter operation strategy was tested on a 40W prototype connected to a linear purely resistive load. Operating the inverter in this manner necessarily produces large-signal behavior, meaning that the linearized averaged model (\cite{2573579,cukthesis,ericuk_pesc82}) is not an accurate model of the evolution of the circuit variables. Also, note that the inverter operation strategy as tested on the prototype is an open-loop strategy since no feedback from the circuit variables is involved in the computation of the applied duty cycle. 

Although the inverter operation strategy proposed in \cite{5759746} was experimentally tested and showed acceptable results, the fact that such a strategy is open-loop and time-varying raises the following issues. First, extinction of the initial transient, i.e., convergence of the state trajectory to the steady-state (time-varying) one, should be established for \emph{every possible} initial condition and \emph{every possible} value of the circuit capacitors, inductors and load. Second, operation of the converter under closed-loop control, which is a necessity in order to compensate for input voltage and load variations, should be ensured not to cause unstable behavior.

% Although the justification for the duty cycle computation method in \cite{5759746} seems intuitively reasonable, some care must be taken when applying a time-varying duty cycle to the inverter. In particular, stability in the sense of boundedness of trajectories becomes an issue which must be carefully addressed, especially if the inverter is to be operated under closed-loop control.

In this work, we address the aforementioned issues by 
%stability analysis for the semi-quasi-Z-source inverter. 
showing that if the load is linear and purely resistive then neither the averaged nor the switched models' state trajectories can become unbounded irrespective of the (time-varying) duty cycle applied (even if it involves feedback), provided the duty cycle does not reach the extreme values 0 or 1 and, for the switched model, the maximum pulse-width modulator (PWM) switching period is suitably limited. Based on these results, we show that convergence to the periodic steady-state behavior that corresponds to a periodic duty cycle (such as that proposed in \cite{5759746}) is ensured when the load is linear and purely resistive. 

For both the averaged and the switched models, our stability results are derived by exploiting the natural energy function of the circuit, even though the quadratic term of its derivative along trajectories is only negative semidefinite. In the case of the averaged model, we show how to modify the natural energy function in order to analytically construct a suitable quadratic Lyapunov function. The stability results for the averaged model provide a starting point for more complicated stability analyses, since it is known that if the switching frequency is sufficiently high, then the averaged model trajectories will be close to those of the switched model \cite{lehbas_tpe96}. In the case of the switched model, we remark that neither of the switching modes (neither of the subsystems, employing switched systems terminology \cite{liberzon_book03}) is asymptotically stable nor detectable from the output voltage, and hence stability is dependent on the limitations imposed on switching. In addition, since neither mode is asymptotically stable, then a dwell-time condition is not sufficient in order to ensure asymptotic stability.

The remainder of the paper is organized as follows. In Section~\ref{sec:inverter}, we provide a brief description of the semi-quasi-Z-source inverter, derive the switched and the averaged models, and explain the duty cycle computation method of \cite{5759746}. In Section~\ref{sec:stab-results}, we derive the stability results for the averaged model and in Section~\ref{sec:sw-model-stab} those for the switched model. Conclusions and future research directions are outlined in Section~\ref{sec:conclusions}. Most proofs are given in the Appendix.

\textbf{Notation.} For a matrix $M$, $\srad(M)$ denotes its spectral radius and $M'$ its transpose. The $i$-th column of the identity matrix is denoted $e_i$ and $\|\cdot\|$ denotes the (induced) 2-norm. 

\section{Semi-quasi-Z-source inverter}
\label{sec:inverter}

Figure~\ref{Switching model} shows the semi-quasi-Z-source inverter connected to a resistive load. The two active components are transistors $S_1$ and $S_2$. In CCM, $S_1$ and $S_2$ operate in a complementary manner so that either $S_1$ is on and $S_2$ is off (Mode I) or $S_1$ is off and $S_2$ is on (Mode II). By discontinuous conduction mode (DCM) we refer to the case when both $S_1$ and $S_2$ are on because either the antiparallel diode of $S_2$ becomes forward biased during Mode I, or that of $S_1$ during Mode II. The duty cycle of $S_1$ is denoted $d$ and hence the duty cycle of $S_2$ is $1-d$ if operating in CCM.
\begin{figure}[htb]
\begin{center}
  \def\svgwidth{7cm}
  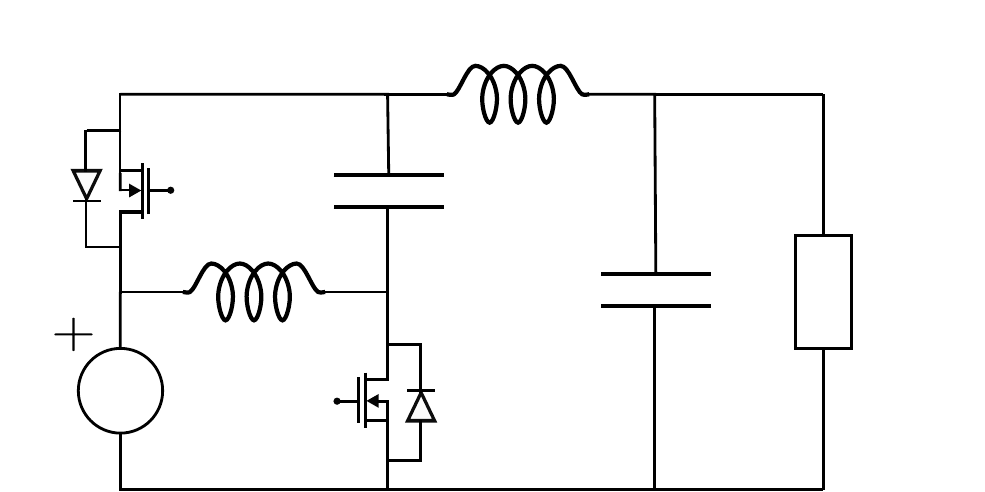
  \caption{Semi-quasi-Z-source inverter.}\label{Switching model}
\end{center}
\end{figure}

\subsection{Switched and averaged models}
\label{sec:sw-avg-models}

Defining the state vector $z$ as
\begin{equation}
  \label{eq:switchedstate}
  z := [i_{L_1} \: i_{L_2} \: v_{C_1} \: v_{C_2}]',
\end{equation}
with the positive convention for each variable as shown in Figure~\ref{Switching model}, the state equations for the switched model in CCM can readily be written as
\begin{align}
  \label{eq:1}
  \dot z(t) &= \lefteqn{[A_{\I} z(t) + b_{\I}] r(t) + [A_{\II} z(t) + b_{\II}] [1-r(t)],}\\
%  \dot z(t) &= A_{r(t)} z(t) + b_{r(t)},  &r(t)&\in\{\I,\II\}\\ %& \dot z(t) &= A_{\II} z(t) + b_{\II},\\
  \label{eq:ArArq}
  A_{i} &= P^{-1} A_{i}^q / 2, &b_{i} &= P^{-1} b_{i}^q / 2,\quad i\in\{\I,\II\},\\
%  A_{r(t)} &= P^{-1} A_{r(t)}^q / 2 &b_{r(t)} &= P^{-1} b_{r(t)}^q / 2\\
  \label{eq:AIAII}
  A_{\I}^q &= \left[
  \begin{smallmatrix}
    0 & 0 & 0 & 0 \\
    0 & 0 & 1 & 1 \\
    0 & -1 & 0 & 0 \\
    0 & -1 & 0 & -1/R
  \end{smallmatrix}\right], 
& A_{\II}^q &=\left[
  \begin{smallmatrix}
    0 & 0  & -1 & 0 \\
    0 & 0  & 0  & 1 \\
    1 & 0  & 0  & 0 \\
    0 & -1 & 0  & -1/R
  \end{smallmatrix}\right],\\
  b_{\I}^q &=\left[
  \begin{smallmatrix}
    V_{\inw} & 0 & 0 & 0
  \end{smallmatrix}\right]', & b_{\II}^q &=\left[
  \begin{smallmatrix}
    0 & -V_{\inw} & 0 & 0
  \end{smallmatrix}\right]',
  \label{eq:2}\\
  \label{eq:P}
  P &:= \lefteqn{\frac{1}{2} \diag(L_1,L_2,C_1,C_2),}
\end{align}
with $r(t) = 1$ for Mode I and $r(t) = 0$ for Mode II.
These equations correspond to the case of a purely resistive load of
resistance $R$. Neither $A_{\I}= P^{-1}A_{\I}^q/2$ nor $A_{\II} =P^{-1}A_{\II}^q/2$ are
Hurwitz since they contain eigenvalues with zero real part. Using the
switched model (\ref{eq:1})--(\ref{eq:P}), the standard averaged model
is derived
\begin{equation}
  \label{eq:avgmodel}
  \dot{\langle z (t)\rangle} = [A_{\I} \langle z (t)\rangle + b_{\I}] d(t) + 
  [A_{\II} \langle z (t)\rangle + b_{\II}] [1-d(t)] 
%\notag\\
%  &= \left[ A_{\I} d(t) + A_{\II} (1-d(t)) \right] \langle z (t)\rangle + b_{\I} d(t) + b_{\II} (1-d(t))
\end{equation}
where $d(t)$ is interpreted as the instantaneous duty cycle of Mode I and is allowed to take any value between $0$ and $1$. From this averaged model, the steady-state equations for a constant duty cycle $d(t) \equiv D$ are given by
\begin{equation}
  \label{eq:avgsteady}
  0 = \left[ A_{\I} D + A_{\II} (1-D) \right] \overline{\langle z \rangle} + b_{\I} D + b_{\II} (1-D).
\end{equation}
Solving (\ref{eq:avgsteady}) yields
%%\begin{equation}
\begin{align}
  \overline{\langle z \rangle} &=
  \begin{bmatrix}
    \overline{\langle i_{L_1}\rangle}\qquad &
    \overline{\langle i_{L_2}\rangle}\qquad &
    \overline{\langle v_{C_1}\rangle}\qquad &
    \overline{\langle v_{C_2}\rangle} 
  \end{bmatrix}'\\
  &=
  \begin{bmatrix}
    \dfrac{\overline{\langle i_{L_2}\rangle}D}{1-D} &
    -\dfrac{\overline{\langle v_{C_2}\rangle}}{R} &
    V_{\inw}\dfrac{D}{1-D} &
    V_{\inw}\dfrac{1-2D}{1-D}
  \end{bmatrix}',\notag\\
  \label{prueba}
  \dfrac{\overline{\langle v_o \rangle}}{V_{\inw}} &= \dfrac{\overline{\langle v_{C_2} \rangle}}{V_{\inw}} = \dfrac{1-2D}{1-D}.
\end{align}
%%\end{equation}
Equation (\ref{prueba}) gives the steady-state input-output voltage gain for a constant duty cycle $D$. Since the averaged models corresponding to $d(t)\equiv D=0$ or $d(t)\equiv D=1$ are not stable, then $D$ cannot reach the extreme values 0 or 1. Therefore, $D \in (0,1)$ and $\overline{\langle v_o \rangle}/V_{\inw} \in (-\infty,1)$. To obtain  $\overline{\langle v_o \rangle}\in (-V_{\inw}, V_{\inw})$ then $D\in (0,\, \frac{2}{3})$.

\subsection{Inverter operation}
\label{sec:inverter-operation}

We next briefly explain the inverter operation strategy proposed in \cite{5759746}, which is based on the steady-state gain equation (\ref{prueba}). In \cite{5759746}, the authors state that if a time-varying duty cycle is applied with a sufficiently slow time variation, then the steady-state gain equation would be approximately valid at every time instant (after the effect of the possible initial transients becomes negligible). Consequently, if the desired sinusoidal output voltage is given by 
\begin{equation}
  \label{tensionsalida}
  \langle v_{o}(t)\rangle = V_o \sin(\omega t),\\
\end{equation}
and if $\omega$ is small enough, then application of the time-varying duty cycle $d(t)$ that satisfies
\begin{equation}
  \label{eq:dutyvargain}
  \dfrac{\langle v_{o} (t)\rangle}{V_{\inw}} = \dfrac{V_o \sin(\omega t)}{V_{\inw}} = \dfrac{1-2d(t)}{1-d(t)}
\end{equation}
would approximately yield the desired sinusoidal waveform at the output. Solving (\ref{eq:dutyvargain}) for $d(t)$ yields
\begin{align}
  \label{duty}
  d(t) &=\frac{1-M\sin (\omega t)}{2-M \sin (\omega t)}\quad
  \text{where }M = \frac{V_o}{V_{\inw}}.
\end{align} 
The circuit component values should be selected so that the harmonic distortion introduced by this steady-state-equation-based operation strategy is acceptable for the required output frequency $\omega$. This operation strategy is tested on a 40W prototype \cite{5759746}, showing good results when connected to a purely resistive linear load. In the next sections, we derive results that give theoretical justification to this inverter operation strategy for all possible values of $L_1$, $L_2$, $C_1$, $C_2$, and $R$.

\section{Averaged model stability}
\label{sec:stab-results}

In this section, we will show that if the load is linear and purely resistive, then the trajectories of the averaged model will remain bounded irrespective of the duty cycle evolution, provided the duty cycle does not reach the extreme values 0 or 1. We then employ these results to show that if an open-loop duty cycle evolution, e.g. (\ref{duty}), is applied to the inverter, then the difference between any two trajectories having different initial conditions will converge to zero exponentially, and uniformly with respect to different duty cycle evolutions.

% \subsection{Stability with linear resistive load}
% \label{sec:stab-bound-analys}

Consider the averaged model (\ref{eq:avgmodel}) and its steady-state solutions (\ref{prueba}). Let $\overline{\langle z \rangle}_{0.5}$ denote the steady-state solution corresponding to $D=0.5$:
\begin{equation}
  \label{eq:barz}
  \overline{\langle z \rangle}_{0.5} =
  \begin{bmatrix}
    0 & 0 & V_{\inw} & 0
  \end{bmatrix}',
\end{equation}
and define
\begin{align}
  \label{eq:xdef}
  x(t) &:= \langle z (t)\rangle - \overline{\langle z \rangle}_{0.5},\\
  \label{eq:mudef}
  \mu(t) &:= d(t)-0.5.
\end{align}
The averaged model in the variable $x$ can be written as
\begin{equation}
  \label{mainsystem2}
  \begin{aligned}
    \dot{x}(t) &=A(\mu(t))x(t) + B\mu(t),
  \end{aligned}
\end{equation}
where 
\begin{align}
  \label{eq:Amu}
  A&(\mu(t)) = P^{-1}\left[A_0^q + E_0^q\mu(t)\right]/2,\\
  \label{eq:B}
  B &=P^{-1}[\begin{smallmatrix}
    2V_{\inw} & 2V_{\inw} & 0 & 0
  \end{smallmatrix}]'/2,\\
  \label{eq:A0E}
  A_0^q &=\left[\begin{smallmatrix}
    0   & 0    & -0.5 & 0 \\ 
    0   & 0    & 0.5  & 1 \\ 
    0.5 & -0.5 & 0    & 0 \\ 
    0   & -1   & 0    & -1/R
  \end{smallmatrix}\right], \quad 
  E_0^q =\left[\begin{smallmatrix}
    0  & 0 & 1 & 0 \\ 
    0  & 0 & 1 & 0 \\ 
    -1 & -1& 0 & 0 \\ 
    0  & 0 & 0 & 0
  \end{smallmatrix}\right],
\end{align}
Since $d(t) \in [0,1]$, then according to (\ref{eq:mudef}) $\mu(t) \in [-0.5,0.5]$. Note that $A(-0.5)=A_{\II}$ and $A(0.5)=A_{\I}$, with $A_{\I}$ and $A_{\II}$ as in (\ref{eq:ArArq})--(\ref{eq:AIAII}). For any given $0 < \varepsilon \le 0.5$, we will consider duty cycle evolutions $d(t) \in [\varepsilon,1-\varepsilon]$, so that $\mu(t) \in \setU$ with
\begin{equation}
  \label{eq:setU}
  \setU := [-\bar\mu,\bar\mu]\quad\text{and}\quad \bar\mu := 0.5-\varepsilon. 
\end{equation}
If $V_{\inw}=0$ (passivation of the input voltage source), then the averaged model (\ref{mainsystem2}) reduces to
\begin{equation}
  \label{eq:avgVin0}
  \dot{x}(t) =A(\mu(t))x(t).
\end{equation}
We will also consider a perturbed version of system (\ref{eq:avgVin0}), given by
\begin{equation}
  \label{eq:avgpert}
  \dot{x}(t) = A(\mu(t))x(t) + Bu(t),
\end{equation}
with $B$ as in (\ref{eq:B}) and $u(t)$ an arbitrary perturbation input.

Our main result for the averaged model is the following.
\begin{theorem}
  \label{thm:boundedness}
  Let $0 < \varepsilon \le 0.5$ and consider the set $\setU$ in (\ref{eq:setU}). Let $H$ be a symmetric matrix defined as in (\ref{eq:H}) and whose nonzero entries satisfy (\ref{eq:Hentries}). 
  \begin{equation}
    \label{eq:H}
    H :=\begin{bmatrix}
      0 & 0 & h_{13} & h_{14} \\
      0 & 0 & h_{23} & h_{24} \\
      h_{13} & h_{23} & 0 & 0\\
      h_{14} & h_{24} & 0 & 0
    \end{bmatrix}
  \end{equation}
  \begin{subequations}
    \label{eq:Hentries}
    \begin{align}
      \label{eq:Hentries1}
      h_{13} &< 0,\qquad h_{14} = -\dfrac{C_{2}(h_{13}-h_{23})}{2C_{1}}\displaybreak[0]\\
      \label{eq:Hentries2}
      h_{23} &< \frac{L_{2}}{L_{1}}\frac{1-\varepsilon}{\varepsilon}h_{13} < 0\displaybreak[0]\\
      \label{eq:Hentries3}
      h_{24} &>\frac{C_{2}}{C_{1}}\frac{(h_{23}-h_{13})^2 \bar\mu^2 +
        h_{13}h_{23}}{-h_{13}4\varepsilon} > 0.
    \end{align}
  \end{subequations}
  For every $\xi > 0$, consider the quadratic function 
  \begin{equation}
    \label{eq:ptil}
    V_\xi(x) = x' \tilde{P}_\xi x,\quad\text{with}\quad \tilde{P}_\xi = P + \xi H
  \end{equation}
  and $P$ as in (\ref{eq:P}). Then,
  \begin{enumerate}[a)]
  \item for every $\xi > 0$ sufficiently small and all $\mu \in
    \setU$, we have $\tilde{P}_{\xi} > 0$ and
    \begin{equation}
      \label{eq:VdotVin0}
      A(\mu)'\tilde{P}_{\xi} + \tilde{P}_{\xi} A(\mu) \le -\alpha(\xi) \id < 0.
    \end{equation}
    Consequently, the origin of system~(\ref{eq:avgVin0}) is globally
    uniformly exponentially stable, where uniformity is with respect
    to the duty cycle $\mu$.\label{item:defneg}
  \item The trajectories of the perturbed system~(\ref{eq:avgpert}) satisfy
    \begin{equation}
      \label{eq:avgISSbnd}
      \|x(t)\| \le K \|x(0)\| e^{-\lambda t} + G \sup_{0 \le \tau \le t} \|u(\tau)\|
    \end{equation}
    for some positive constants $K$, $\lambda$ and $G$, whenever $\mu(t) \in \setU$ for all $t\ge 0$. Consequently, the system~(\ref{eq:avgpert}) is input-to-state stable (ISS) with respect to the input $u$, with exponential decay rate and linear input-to-state asymptotic gain, uniformly with respect to the duty cycle $\mu$.\label{item:iss}
  \item The trajectories of the averaged model (\ref{mainsystem2}) satisfy
    \begin{equation}
      \label{eq:avgmubnd}
      \|x(t)\| \le K \|x(0)\| e^{-\lambda t} + G \bar\mu
    \end{equation}
    with the same positive constants $K$, $\lambda$ and $G$ as in item~\ref{item:iss}) above, and provided $\mu(t) \in \setU$ for all $t\ge 0$.\label{item:avgub}
  \end{enumerate}
\end{theorem}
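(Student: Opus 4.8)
The plan is to construct a strict quadratic Lyapunov function by perturbing the natural energy function $x'Px$, whose derivative along (\ref{eq:avgVin0}) turns out to be only negative \emph{semi}definite. First I would compute that derivative. From (\ref{eq:Amu}) one has $PA(\mu)=[A_0^q+\mu E_0^q]/2$, so $A(\mu)'P+PA(\mu)=\tfrac12\bigl[(A_0^q+(A_0^q)')+\mu(E_0^q+(E_0^q)')\bigr]$. Inspecting (\ref{eq:A0E}) shows that $E_0^q$ is skew-symmetric and that the symmetric part of $A_0^q$ has the single nonzero entry $-1/R$ in position $(4,4)$. Hence $A(\mu)'P+PA(\mu)=\diag(0,0,0,-1/R)=:D_0$ \emph{independently of} $\mu$, so the derivative of $x'Px$ equals $-x_4^2/R\le 0$: the energy dissipates only through the load. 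This is precisely the semidefiniteness that motivates the cross-term correction $\xi H$ coupling the currents $(x_1,x_2)$ and voltages $(x_3,x_4)$.

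For item \ref{item:defneg}) the crux is the matrix $M(\mu):=A(\mu)'H+HA(\mu)$, which I would compute entrywise using $A(\mu)=P^{-1}[A_0^q+\mu E_0^q]/2$, $P=\tfrac12\diag(L_1,L_2,C_1,C_2)$ and the block form of $H$ in (\ref{eq:H}). Two structural facts emerge: $M_{13}=M_{23}=0$, so the leading $3\times 3$ block $M_3(\mu)$ decouples the $x_3$-direction, and the choice of $h_{14}$ in (\ref{eq:Hentries1}) is exactly what cancels the $\mu$-independent part of the $(1,2)$ entry, leaving $M_{12}=-\mu(h_{13}+h_{23})/C_1$. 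Negative definiteness of $M_3(\mu)$ for all $\mu\in\setU$ then splits into three principal-minor conditions: $M_{11}<0$ (immediate from $h_{13}<0$ and $\mu-0.5<0$ on $\setU$); $M_{33}<0$, which is linear and decreasing in $\mu$ so its binding endpoint $\mu=-\bar\mu$ yields exactly (\ref{eq:Hentries2}); and the $2\times 2$ determinant condition $M_{11}M_{22}>M_{12}^2$, whose worst case is the endpoint $\mu=\bar\mu$ (where the $1/(\mu-0.5)$ factor is largest in magnitude) and which, after inserting (\ref{eq:Hentries1}), reduces exactly to (\ref{eq:Hentries3}); note that $M_{22}<0$ is then automatic from $M_{11}<0$ and $M_{11}M_{22}>0$. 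With $M_3(\mu)<0$ uniformly on the compact set $\setU$, I would pass to the full matrix $D_0+\xi M(\mu)$ by a Schur complement on the $(4,4)$ entry: for small $\xi$ that entry equals $-1/R+\xi M_{44}<0$, and the corresponding Schur complement equals $\xi M_3(\mu)$ plus a positive-semidefinite term of order $\xi^2$, which stays negative definite for small $\xi$ since $M_3(\mu)<0$ uniformly. This establishes (\ref{eq:VdotVin0}) with some $\alpha(\xi)>0$. Positivity $\tilde P_\xi=P+\xi H>0$ for small $\xi$ follows since $P>0$ and $\xi H$ is a vanishing symmetric perturbation, and GUES uniform in $\mu$ follows from the standard estimate $\dot V_\xi\le-(\alpha(\xi)/\lambda_{\max}(\tilde P_\xi))V_\xi$.

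For item \ref{item:iss}), with the strict Lyapunov function in hand the ISS estimate is routine: along (\ref{eq:avgpert}) one has $\dot V_\xi\le-\alpha(\xi)\|x\|^2+2\|\tilde P_\xi B\|\,\|x\|\,\|u\|$, and the standard ISS-Lyapunov / completion-of-squares argument yields (\ref{eq:avgISSbnd}) with $K$, $\lambda$, $G$ expressed through $\alpha(\xi)$, $\lambda_{\min}(\tilde P_\xi)$, $\lambda_{\max}(\tilde P_\xi)$ and $\|\tilde P_\xi B\|$, uniformly in $\mu$. Item \ref{item:avgub}) is then immediate: the averaged model (\ref{mainsystem2}) is exactly (\ref{eq:avgpert}) with $u(t)=\mu(t)$, and since $\mu(t)\in\setU=[-\bar\mu,\bar\mu]$ forces $\sup_{0\le\tau\le t}\|u(\tau)\|\le\bar\mu$, substitution into (\ref{eq:avgISSbnd}) produces (\ref{eq:avgmubnd}) with the same constants.

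I expect the main obstacle to be item \ref{item:defneg}): carrying out the entrywise evaluation of $M(\mu)$ and, above all, verifying that (\ref{eq:Hentries1})--(\ref{eq:Hentries3}) are exactly the conditions rendering $M_3(\mu)$ negative definite \emph{uniformly} over $\setU$. This requires identifying the correct worst-case $\mu$ for each minor (the endpoints $\mp\bar\mu$) and, in particular, confirming that the determinant condition -- the delicate one because of the $\mu/(\mu-0.5)$ nonlinearity -- is maximized at $\mu=\bar\mu$; the subsequent small-$\xi$ Schur-complement step is then a comparatively routine perturbation argument.
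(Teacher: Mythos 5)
Your proposal is correct and takes essentially the same route as the paper's proof: the same perturbed energy function $\tilde P_\xi = P + \xi H$, the same reduction to negative definiteness of the restriction of $A(\mu)'H + HA(\mu)$ to $\ker Q = \spn\{e_1,e_2,e_3\}$ checked via leading principal minors with the binding cases at the endpoints of $\setU$ (recovering exactly (\ref{eq:Hentries})), and the same routine ISS argument and $u(t)=\mu(t)$ substitution for items b) and c). The only cosmetic differences are that you perform the small-$\xi$ step by a direct Schur complement on the $(4,4)$ entry, uniformly over the compact set $\setU$, where the paper instead invokes its Lemma~\ref{lem:HkerQ} at the two endpoint matrices $A(\pm\bar\mu)$ and uses affinity of the Lyapunov inequality in $\mu$, and that your sign convention (e.g.\ $M_{11}\propto h_{13}(1-2\mu)$, worst cases at $\mu=\pm\bar\mu$ swapped) is the mirror image of the paper's printed $X_1 \propto h_{13}(1+2\mu)$ --- immaterial since $\setU$ is symmetric, and your signs in fact appear to be the consistent ones.
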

Theorem~\ref{thm:boundedness} establishes the global uniform boundedness and global uniform ultimate boundedness of the state of the averaged model (\ref{mainsystem2}) or (\ref{eq:avgmodel}), irrespectively of the duty cycle evolution but provided that $\mu(t) \in \setU$ for all $t\ge 0$. We highlight that the global nature of this result is only theoretical because it is based on the assumption that the inverter operates in CCM. As we previously mentioned, in order for the inverter to operate in CCM, then the instantaneous $C_1$ voltage cannot become lower than $-V_{\inw}$. 
% Bound is independent of duty cycle evolution so it is a worst-case bound.

The importance of Theorem~\ref{thm:boundedness} lies on the fact that it establishes the boundedness of the trajectories for \emph{every} possible evolution of the duty cycle (provided it does not reach the extreme values) and \emph{all} positive values of $L_1$, $L_2$, $C_1$, $C_2$ and $R$. Theorem~\ref{thm:boundedness} can also be used to show that the difference between any two trajectories starting from different initial conditions but corresponding to the same duty cycle evolution will ultimately converge to zero. 
\begin{cor}
  \label{cor:perconv}
  Let $0<\varepsilon\le 0.5$ and $\mu^* : \R \to \setU$ with $\setU$ as in (\ref{eq:setU}). Let $\phi(t,x_o)$ denote the solution to (\ref{mainsystem2}) that satisfies $\dot\phi(t,x_o) = A(\mu^*(t))\phi(t,x_o)+B\mu^*(t)$ and $\phi(0,x_o)=x_o$. Then, there exist positive constants $K$ and $\lambda$ such that for all $t\ge 0$ and all $x_o,y_o$, 
  \begin{equation}
    \label{eq:perconv}
    \| \phi(t,x_o) - \phi(t,y_o) \| \le Ke^{-\lambda t} \|x_o-y_o\|.
  \end{equation}
\end{cor}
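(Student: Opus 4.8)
The plan is to reduce the claim to item~\ref{item:defneg}) of Theorem~\ref{thm:boundedness} by observing that the difference of two trajectories driven by the \emph{same} duty cycle obeys the unforced dynamics~(\ref{eq:avgVin0}). Concretely, I would introduce the difference trajectory
\[
  w(t) := \phi(t,x_o) - \phi(t,y_o),
\]
whose initial condition is $w(0) = x_o - y_o$, and then compute its derivative along the two given solutions.

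The key step is the cancellation of the affine forcing term. Since both $\phi(t,x_o)$ and $\phi(t,y_o)$ are generated with the same signal $\mu^*(t)$, subtracting their defining equations gives
\[
  \dot w(t) = A(\mu^*(t))\phi(t,x_o) + B\mu^*(t) - \bigl[A(\mu^*(t))\phi(t,y_o) + B\mu^*(t)\bigr] = A(\mu^*(t))\, w(t),
\]
so that the term $B\mu^*(t)$ drops out entirely and $w$ satisfies exactly the homogeneous system~(\ref{eq:avgVin0}). This is the whole substance of the argument: the corollary is really a statement about the linear time-varying homogeneous part of~(\ref{mainsystem2}).

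With this reduction in hand, I would invoke item~\ref{item:defneg}) of Theorem~\ref{thm:boundedness}. That item provides, for $\xi>0$ small enough, a common quadratic Lyapunov function $V_\xi$ whose derivative satisfies~(\ref{eq:VdotVin0}) for all $\mu\in\setU$, and hence establishes that the origin of~(\ref{eq:avgVin0}) is globally uniformly exponentially stable, the uniformity being precisely with respect to admissible duty cycles valued in $\setU$. Since $\mu^*(t)\in\setU$ for all $t$ by hypothesis, there exist positive constants $K$ and $\lambda$, \emph{independent of the particular signal} $\mu^*$, such that every solution $w$ of~(\ref{eq:avgVin0}) satisfies $\|w(t)\|\le Ke^{-\lambda t}\|w(0)\|$. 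Substituting $w(t)=\phi(t,x_o)-\phi(t,y_o)$ and $w(0)=x_o-y_o$ then yields the bound~(\ref{eq:perconv}).

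I expect no genuine analytic obstacle here, as the result is an immediate consequence of the uniform exponential stability already proved. The only point that warrants care is to confirm that the constants $K$ and $\lambda$ extracted from Theorem~\ref{thm:boundedness} do not depend on $\mu^*$; this is guaranteed precisely by the \emph{uniform} (common-Lyapunov-function) nature of the estimate~(\ref{eq:VdotVin0}), which is what makes the exponential convergence rate in~(\ref{eq:perconv}) hold uniformly over all admissible duty-cycle evolutions.
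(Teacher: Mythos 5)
Your proof is correct and is essentially identical to the paper's own one-line argument: the paper likewise sets $\delta(t)=\phi(t,x_o)-\phi(t,y_o)$, observes that the forcing term $B\mu^*(t)$ cancels so that $\dot\delta(t)=A(\mu^*(t))\delta(t)$, and applies Theorem~\ref{thm:boundedness}\ref{item:defneg}). Your added remark that the uniformity of $K$ and $\lambda$ over admissible $\mu^*$ comes from the common quadratic Lyapunov function $V_\xi$ satisfying~(\ref{eq:VdotVin0}) is exactly the mechanism the paper relies on.
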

\begin{proof}
  Let $\delta(t) = \phi(t,x_o)-\phi(t,y_o)$. Then, $\dot\delta(t) = A(\mu^*(t)) \delta(t)$ and apply Theorem~\ref{thm:boundedness}\ref{item:defneg}).
\end{proof}
Corollary~\ref{cor:perconv} becomes especially useful in the case of a periodic duty cycle, such as that proposed in \cite{5759746} and shown in (\ref{duty}). When $d(t)$ (and hence $\mu(t)$) is periodic and there exists a periodic steady-state trajectory $x^*(t)$ corresponding to such a periodic duty cycle evolution, then Corollary~\ref{cor:perconv} establishes that the state trajectory will converge to the steady-state periodic trajectory when started from \emph{any} initial condition (provided the inverter operates in CCM at all times). This latter result establishes the rationale for the inverter operation strategy proposed in \cite{5759746}, since it shows that the start-up transient will become extinct and hence the system will (eventually) operate along the periodic steady-state trajectory (which generates a sinusoidal waveform at the output if its frequency is low enough).

\section{Switched model stability}
\label{sec:sw-model-stab}

In this section, we provide results analogous to those in Theorem~\ref{thm:boundedness} and Corollary~\ref{cor:perconv} but for the switched model of the semi-quasi-Z-source inverter.

Consider again the switched model (\ref{eq:1})--(\ref{eq:P}). We define
\begin{equation}
  \label{eq:xswitch}
  x(t) := z(t) - \overline{\langle z \rangle}_{0.5},\quad
  \mu(t) := r(t) - 0.5
\end{equation}
with $\overline{\langle z \rangle}_{0.5}$ as in (\ref{eq:barz}). In the new variables, the switched model of the inverter coincides with (\ref{mainsystem2})--(\ref{eq:A0E}), the only difference being that $\mu(t)$ can only take the values $0.5$ (Mode I) or $-0.5$ (Mode II). We consider that the switching function $\mu(t)$ is the output of a PWM. We thus consider the following class of switching signals.
\begin{defn}
  A signal $\mu : \R \to \{-0.5,0.5\}$ is said to be of class $\PWM(T,\epsilon)$ with $0 < 2\epsilon \le T$, if it is right-continuous and for every integer $k$ there exists a time instant $t_k$ satisfying $kT+\epsilon \le t_k \le (k+1)T-\epsilon$ and
  \begin{equation}
    \label{eq:mupwm}
    \mu(t) =
    \begin{cases}
      0.5 &\text{if }\ kT \le t < t_k,\\
      -0.5 &\text{if }\ t_k \le t < (k+1)T.
    \end{cases}
  \end{equation}
  % and
  % \begin{equation}
  %   \label{eq:tkeps}
  %   t_k - kT \ge \epsilon\quad\text{and}\quad
  %   (k+1)T - t_k \ge \epsilon.
  % \end{equation}
  The quantity $T$ will be called the period of $\mu$ (even though $\mu$ need not be periodic) and $\epsilon$ the dwell-time.
\end{defn}

Our main results for the switched model are given below as Theorem~\ref{thm:Vin0GUES} and Corollary~\ref{cor:swtrstab}. These results require the following lemma.
%
% \subsection{Stability with linear resistive load}
% \label{sec:sw-stab-R}
%
\begin{lem}
  \label{lem:cqlf}
  Consider the matrix $P$ as in (\ref{eq:P}), and $A_{\I}$, $A_{\II}$ as in (\ref{eq:ArArq})--(\ref{eq:AIAII}). Let $t_{\I}$, $t_{\II}$ be positive and
  % \begin{equation}
  %   \label{eq:tItIIa}
  %   0 < t_{\I} < \infty,\qquad 0 < t_{\II} %< \pi \sqrt{L_1C_1}.
  % \end{equation}
  for every $\epsilon \ge 0$ define
  \begin{equation}
    \label{eq:M}
    M_\epsilon := e^{\D A_{\I}\epsilon}e^{\D A_{\II}t_{\II}}e^{\D A_{\I} t_{\I}}.
  \end{equation}
%  Then, 
  \begin{enumerate}[a)]
  \item If $t_{\II} < \pi \sqrt{L_1C_1}$,\label{item:swdtcqlf}
    % \begin{equation}
    %   \label{eq:tItII}
    %   t_{\II} < \pi \sqrt{L_1C_1},
    % \end{equation}
    then $\srad(M_0)<1$ and 
%  \item for all $\epsilon>0$, the matrix $M_\epsilon$ satisfies
    \begin{equation}
      \label{eq:Me}
      M_\epsilon' P M_\epsilon - P < 0,\quad\text{for all }\epsilon>0.
    \end{equation}
  \item If $t_{\II} = k\pi\sqrt{L_1C_1}$ for some positive integer $k$, then $\srad(M_\epsilon)=1$ for all $\epsilon\ge 0$.\label{item:swdtnexp}
  \end{enumerate}
\end{lem}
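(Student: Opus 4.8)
The plan is to exploit the same dissipativity structure that underlies the matrix $P$ in the averaged-model analysis. The crucial algebraic fact is that, by (\ref{eq:ArArq})--(\ref{eq:AIAII}) and (\ref{eq:P}), both modes share the \emph{same} dissipation matrix,
\begin{equation}
  Q := A_{\I}'P + PA_{\I} = A_{\II}'P + PA_{\II} = \diag(0,0,0,-1/R) \le 0,
\end{equation}
so energy is removed only through the fourth ($v_{C_2}$) coordinate. Differentiating $(e^{A_i s})'P\,e^{A_i s}$ in $s$ and integrating gives, for $i\in\{\I,\II\}$,
\begin{equation}
  P - (e^{A_i t})'P\,e^{A_i t} = \frac{1}{R}\int_0^t (e^{A_i s})' e_4 e_4' e^{A_i s}\,ds \ge 0,
\end{equation}
i.e.\ each factor is non-expansive in the $P$-inner product. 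Chaining the three factors of $M_\epsilon$ then yields $M_\epsilon' P M_\epsilon \le P$ for every $\epsilon\ge 0$, and since the spectral radius is norm-independent this already gives the upper bound $\srad(M_\epsilon)\le 1$. Both parts therefore reduce to deciding when equality (a surviving unit-modulus direction) is possible. The right-hand integral above is $1/R$ times the finite-horizon observability Gramian of the pair $(e_4',A_i)$, so its kernel is exactly the unobservable subspace of that pair.

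I would then compute these two subspaces explicitly. In Mode I one has $A_{\I}e_1=0$ (its first column vanishes), and imposing $v_{C_2}\equiv 0$ successively forces $x_2=x_3=x_4=0$, so the unobservable subspace of $(e_4',A_{\I})$ is $\spn\{e_1\}$. In Mode II the homogeneous dynamics decouple into a lossless $L_1C_1$ oscillator in the coordinates $(i_{L_1},v_{C_1})=(x_1,x_3)$ and a damped $L_2C_2R$ block in $(x_2,x_4)$; hence the unobservable subspace of $(e_4',A_{\II})$ is $\spn\{e_1,e_3\}$, and on the oscillator block, with $\omega_1 := 1/\sqrt{L_1C_1}$,
\begin{equation}
  e^{A_{\II}t_{\II}}e_1 = \cos(\omega_1 t_{\II})\,e_1 + \frac{\sin(\omega_1 t_{\II})}{\omega_1 C_1}\,e_3 .
\end{equation}

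For part~\ref{item:swdtcqlf}) I would use the telescoping identity $P - M_\epsilon' P M_\epsilon = (P-F_1'PF_1) + F_1'(P-F_2'PF_2)F_1 + F_1'F_2'(P-F_3'PF_3)F_2F_1$ with $F_1=e^{A_{\I}t_{\I}}$, $F_2=e^{A_{\II}t_{\II}}$, $F_3=e^{A_{\I}\epsilon}$, all three summands being positive semidefinite. If $x'(P-M_\epsilon'PM_\epsilon)x=0$ then each summand must annihilate the relevant vector: the first forces $x=c\,e_1$; since $A_{\I}e_1=0$ this same $c\,e_1$ enters Mode~II, where the displayed rotation sends it to $c\cos(\omega_1 t_{\II})\,e_1 + c\,\frac{\sin(\omega_1 t_{\II})}{\omega_1 C_1}\,e_3$; and for $\epsilon>0$ the third summand forces this image back into $\spn\{e_1\}$, i.e.\ $\sin(\omega_1 t_{\II})=0$. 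But $0<t_{\II}<\pi\sqrt{L_1C_1}$ gives $0<\omega_1 t_{\II}<\pi$, so $\sin(\omega_1 t_{\II})\ne 0$ and hence $c=0$, $x=0$, which is (\ref{eq:Me}). The claim $\srad(M_0)<1$ follows by the same mechanism applied to a (possibly complex) eigenvector $v$ of $M_0=F_2F_1$ with $|\lambda|=1$: non-expansiveness forces $(P-M_0'PM_0)v=0$, hence $v=c\,e_1$, and then $M_0 v = c\,e^{A_{\II}t_{\II}}e_1=\lambda v$ again requires $\sin(\omega_1 t_{\II})=0$, impossible in this range, so no eigenvalue lies on the unit circle.

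For part~\ref{item:swdtnexp}), $t_{\II}=k\pi\sqrt{L_1C_1}$ gives $\sin(\omega_1 t_{\II})=0$ and $\cos(\omega_1 t_{\II})=(-1)^k$, whence $e^{A_{\II}t_{\II}}e_1=(-1)^k e_1$; combined with $A_{\I}e_1=0$ (so $e^{A_{\I}t_{\I}}e_1=e^{A_{\I}\epsilon}e_1=e_1$) this yields $M_\epsilon e_1=(-1)^k e_1$, a unit-modulus eigenvalue, and with $\srad(M_\epsilon)\le 1$ we conclude $\srad(M_\epsilon)=1$ for all $\epsilon\ge 0$. The main obstacle is the geometric middle step: correctly identifying the two unobservable subspaces — in particular recognizing the Mode~II decoupling into the undamped $L_1C_1$ oscillator — and then arguing via the telescoping-equality decomposition that the only loss-free direction surviving a Mode~I segment is $\spn\{e_1\}$, which Mode~II rotates \emph{out} of $\spn\{e_1\}$ precisely unless $t_{\II}$ is an integer multiple of $\pi\sqrt{L_1C_1}$. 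Once this picture is in place, the sign conditions are immediate.
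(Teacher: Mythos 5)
Your proof is correct and follows essentially the same route as the paper: both rest on the common dissipation identity $A_{\I}'P + PA_{\I} = A_{\II}'P + PA_{\II} = -\diag(0,0,0,1/R)$, which makes each exponential factor $P$-nonexpansive, identify the zero-loss directions as the subspaces $\Sb_{\I}=\spn\{e_1\}$ and $\Sb_{\II}=\spn\{e_1,e_3\}$ (your observability-Gramian kernels are exactly the paper's largest invariant subspaces in $\ker Q$), and use the $L_1C_1$ rotation carrying $e_1$ out of $\spn\{e_1\}$ unless $t_{\II}\in\pi\sqrt{L_1C_1}\,\Z$. The only local divergence is in establishing $\srad(M_0)<1$ for $\epsilon=0$: you exclude unit-modulus eigenvalues directly via a peripheral-eigenvector argument (valid, and arguably more direct), whereas the paper instead shows strict energy decrease over two periods, i.e.\ $(M_0^2)'PM_0^2 - P < 0$, and concludes via $\srad(M_0^2)<1$.
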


\begin{theorem}
  \label{thm:Vin0GUES}
  Consider $A(\cdot)$ and $B$ as in (\ref{eq:Amu})--(\ref{eq:A0E}) with $P$ as in (\ref{eq:P}) and let $\epsilon$ and $T$ satisfy $0 < 2\epsilon \le T < \pi\sqrt{L_1 C_1}$. Then, for all $\mu \in \PWM(T,\epsilon)$, we have that
  \begin{enumerate}[a)]
  \item there exist positive constants $K$ and $\lambda$ so that for every initial condition $x(0)$ and all $t\ge 0$, the trajectories of (\ref{eq:avgVin0}) satisfy
    \begin{equation}
      \label{eq:uges}
      \|x(t)\| \le K \|x(0)\| e^{-\lambda t},
    \end{equation}\label{item:swv0exp}
  \item there exist positive constants $K$, $\lambda$ and $G$ such that for every initial condition $x(0)$ and all $t\ge 0$, the trajectories of (\ref{eq:avgpert}) satisfy
    \begin{equation}
      \label{eq:ISSu}
      \|x(t)\| \le K \|x(0)\| e^{-\lambda t} + G \sup_{0 \le \tau \le t} \|u(\tau)\|,
    \end{equation}\label{item:swiss}
  \item the trajectories of the switched model (\ref{mainsystem2}) satisfy
    \begin{equation}
      \label{eq:uub}
      \|x(t)\| \le K \|x(0)\| e^{-\lambda t} + \frac{G}{2}
    \end{equation}
    with the same $K$, $\lambda$ and $G$ as for item~\ref{item:swiss}) above.\label{item:swbnd}
  \end{enumerate}
\end{theorem}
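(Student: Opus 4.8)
The plan is to prove item~\ref{item:swv0exp}) first, since items~\ref{item:swiss}) and \ref{item:swbnd}) will follow from it by standard linear-systems arguments, and then treat the three items in order. The starting point is the natural energy function $V(x) = x'Px$, with $P$ as in (\ref{eq:P}). A direct computation gives $A_{\I}'P + PA_{\I} = A_{\II}'P + PA_{\II} = -Q$, where $Q = \diag(0,0,0,1/R) \ge 0$; thus $V$ is a \emph{weak} common Lyapunov function for both switching modes, non-increasing along every trajectory of (\ref{eq:avgVin0}) regardless of $\mu$. In particular $(e^{A_{\I}s})'Pe^{A_{\I}s} \le P$ and $(e^{A_{\II}s})'Pe^{A_{\II}s} \le P$ for all $s \ge 0$. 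Negative semidefiniteness alone yields only boundedness, so the exponential decay must come from the switching, which is precisely what Lemma~\ref{lem:cqlf} supplies.

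For $\mu \in \PWM(T,\epsilon)$ write $\tau_1^{(k)} = t_k - kT$ and $\tau_2^{(k)} = (k+1)T - t_k$ for the durations of Mode~I and Mode~II in the $k$-th period; by the PWM definition both lie in $[\epsilon, T-\epsilon]$, and in particular $\tau_2^{(k)} \le T - \epsilon < \pi\sqrt{L_1 C_1}$. The transition matrix of (\ref{eq:avgVin0}) over the interval $[kT, (k+1)T + \epsilon]$ is exactly $M_\epsilon^{(k)} := e^{A_{\I}\epsilon}e^{A_{\II}\tau_2^{(k)}}e^{A_{\I}\tau_1^{(k)}}$, i.e.\ of the form (\ref{eq:M}) with $t_{\I} = \tau_1^{(k)} > 0$, $t_{\II} = \tau_2^{(k)} \in (0,\pi\sqrt{L_1C_1})$ and dwell~$\epsilon > 0$. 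Lemma~\ref{lem:cqlf}\ref{item:swdtcqlf}) then gives $(M_\epsilon^{(k)})'P M_\epsilon^{(k)} - P < 0$ for each $k$. I would next upgrade this to a \emph{uniform} contraction: since $(\tau_1,\tau_2) \mapsto e^{A_{\I}\epsilon}e^{A_{\II}\tau_2}e^{A_{\I}\tau_1}$ is continuous and the admissible pairs range over the compact set $[\epsilon, T-\epsilon]^2$, the continuous map $(\tau_1,\tau_2)\mapsto \lambda_{\max}\big((M_\epsilon)'PM_\epsilon - P\big)$ attains a negative maximum, so there is $\gamma \in [0,1)$, independent of $k$, with $(M_\epsilon^{(k)})'PM_\epsilon^{(k)} \le \gamma P$. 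Hence $V(x((k+1)T+\epsilon)) \le \gamma\, V(x(kT))$. Because $V$ is non-increasing and $(k+2)T \ge (k+1)T+\epsilon$, this yields the clean two-period recursion $V(x((k+2)T)) \le \gamma\, V(x(kT))$; combined once more with monotonicity between samples and the equivalence $c_1\|x\|^2 \le V(x) \le c_2\|x\|^2$ (from $P>0$), I obtain $\|x(t)\| \le K\|x(0)\|e^{-\lambda t}$ with $\lambda = -\tfrac{\ln\gamma}{2T}$, which is (\ref{eq:uges}). I expect the main obstacle to be this uniformity-and-overlap step: the windows over which $V$ strictly decreases have length $T+\epsilon$ and therefore overlap, so one cannot chain the per-window estimates naively; the resolution is the shift by one period enabled by monotonicity of $V$, together with the compactness argument that rules out $\gamma$ approaching $1$ as $(\tau_1,\tau_2)$ varies.

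For item~\ref{item:swiss}) I would note that the estimate just obtained holds uniformly with respect to the initial time as well: starting the same argument at any $s \ge 0$ (absorbing the at-most-two-period initial transient into the constant) gives $\|\Phi(t,s)\| \le K e^{-\lambda(t-s)}$ for the transition matrix $\Phi$ of (\ref{eq:avgVin0}). Variation of constants applied to (\ref{eq:avgpert}), namely $x(t) = \Phi(t,0)x(0) + \int_0^t \Phi(t,\tau)Bu(\tau)\,d\tau$, then gives $\|x(t)\| \le K\|x(0)\|e^{-\lambda t} + \tfrac{K\|B\|}{\lambda}\sup_{0\le\tau\le t}\|u(\tau)\|$, which is (\ref{eq:ISSu}) with $G = K\|B\|/\lambda$. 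Finally, item~\ref{item:swbnd}) is immediate: the switched model (\ref{mainsystem2}) is exactly (\ref{eq:avgpert}) with the perturbation $u(t) = \mu(t) \in \{-0.5,0.5\}$, so $\sup_\tau\|u(\tau)\| = 1/2$ and (\ref{eq:ISSu}) specializes to (\ref{eq:uub}) with the same $K$, $\lambda$ and $G$.
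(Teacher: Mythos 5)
Your proof is correct, and its core coincides with the paper's: the weak common energy function $V(x)=x'Px$ with $A_{\I}'P+PA_{\I}=A_{\II}'P+PA_{\II}=-Q\le 0$, strict contraction of $V$ over each PWM cycle supplied by Lemma~\ref{lem:cqlf}\ref{item:swdtcqlf}), and uniformity of the contraction factor by continuity of the cycle map over the compact set of admissible dwell times. You deviate in two places, both legitimately. First, the window bookkeeping: you sample at the period boundaries $kT$ and use the overlapping windows $[kT,(k+1)T+\epsilon]$, repairing the overlap with the two-period recursion $V(x((k+2)T))\le V(x((k+1)T+\epsilon))\le\gamma\,V(x(kT))$; the paper instead shifts the sampling grid to $t_k=kT+\epsilon/2$, splitting the guaranteed Mode-I dwell at the start of each period into two halves so that the monodromy matrices $e^{A_{\I}\epsilon/2}e^{A_{\II}t_{\II,k}}e^{A_{\I}t_{\I,k}}$ tile the intervals $[t_k,t_{k+1}]$ without overlap and a one-step contraction results. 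Your variant costs only constants (one harmless slip: with $V$-decay factor $\gamma$ per $2T$, the norm decay rate after taking square roots is $\lambda=-\ln\gamma/(4T)$, not $-\ln\gamma/(2T)$ --- immaterial, since the theorem leaves the constants unspecified). Second, for item~\ref{item:swiss}) you observe that the item~\ref{item:swv0exp}) argument is uniform in the initial time, so the transition matrix satisfies $\|\Phi(t,s)\|\le Ke^{-\lambda(t-s)}$, and you conclude by variation of constants with the explicit gain $G=K\|B\|/\lambda$; the paper instead re-runs a sampled-data Lyapunov/ISS computation, bounding the increment of $V$ over each window including the cross terms (its (\ref{eq:deltaVwu})) and then interpolating between samples. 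Your route for this item is shorter and more standard, buying a clean closed-form gain; the paper's stays entirely within the energy-function framework and builds the ISS bound constructively without invoking a uniform transition-matrix estimate. Item~\ref{item:swbnd}) (setting $u=\mu$ with $\sup_\tau\|u(\tau)\|=1/2$) is identical to the paper's.
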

\begin{cor}
  \label{cor:swtrstab}
  Consider $A(\cdot)$ and $B$ as in (\ref{eq:Amu})--(\ref{eq:A0E}) with $P$ as in (\ref{eq:P}). Let $\epsilon$ and $T$ satisfy $0 < 2\epsilon \le T < \pi\sqrt{L_1 C_1}$ and consider $\mu^* \in \PWM(T,\epsilon)$. Let $\phi(t,x_o)$ denote the solution to (\ref{mainsystem2}) that satisfies $\dot\phi(t,x_o) = A(\mu^*(t))\phi(t,x_o)+B\mu^*(t)$ and $\phi(0,x_o)=x_o$. Then, there exist positive constants $K$ and $\lambda$ such that for all $t\ge 0$ and all $x_o,y_o$, 
  \begin{equation}
    \label{eq:swperconv}
    \| \phi(t,x_o) - \phi(t,y_o) \| \le Ke^{-\lambda t} \|x_o-y_o\|.
  \end{equation}
\end{cor}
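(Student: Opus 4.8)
The plan is to mirror the proof of Corollary~\ref{cor:perconv} verbatim, simply replacing the averaged-model stability result (Theorem~\ref{thm:boundedness}) with its switched-model counterpart (Theorem~\ref{thm:Vin0GUES}). The essential point is that both $\phi(t,x_o)$ and $\phi(t,y_o)$ are driven by the \emph{same} PWM switching signal $\mu^*$, so the affine forcing term cancels upon subtraction.

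First I would introduce the difference trajectory $\delta(t) := \phi(t,x_o) - \phi(t,y_o)$. Since $\phi(t,x_o)$ and $\phi(t,y_o)$ each satisfy $\dot\phi(t,\cdot) = A(\mu^*(t))\phi(t,\cdot) + B\mu^*(t)$ with the identical $\mu^*$, subtracting the two equations makes the term $B\mu^*(t)$ drop out, leaving the homogeneous linear time-varying system $\dot\delta(t) = A(\mu^*(t))\delta(t)$. This is precisely system~(\ref{eq:avgVin0}) driven by the switching signal $\mu^*$, with initial condition $\delta(0) = x_o - y_o$.

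Next I would check that the hypotheses of Theorem~\ref{thm:Vin0GUES} hold for $\mu^*$: by assumption $\mu^* \in \PWM(T,\epsilon)$ with $0 < 2\epsilon \le T < \pi\sqrt{L_1 C_1}$. Hence item~\ref{item:swv0exp}) of that theorem applies directly to $\delta$, furnishing positive constants $K$ and $\lambda$ (independent of the particular $\mu^* \in \PWM(T,\epsilon)$) such that $\|\delta(t)\| \le K\|\delta(0)\| e^{-\lambda t}$ for all $t \ge 0$. Substituting $\delta(0) = x_o - y_o$ yields~(\ref{eq:swperconv}) and completes the argument.

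I do not expect any genuine obstacle here, as the corollary is an immediate consequence of the already-established global uniform exponential stability of the unforced switched system. The only detail requiring care is precisely the one that makes the argument work: both trajectories must correspond to the same switching signal $\mu^*$, since it is this that guarantees the cancellation of the $B\mu^*(t)$ forcing term and reduces the difference dynamics to the homogeneous system~(\ref{eq:avgVin0}) covered by Theorem~\ref{thm:Vin0GUES}.
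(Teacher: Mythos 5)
Your proposal is correct and coincides exactly with the paper's own proof: define $\delta(t) = \phi(t,x_o)-\phi(t,y_o)$, observe that the forcing term $B\mu^*(t)$ cancels so that $\dot\delta(t) = A(\mu^*(t))\delta(t)$, and apply Theorem~\ref{thm:Vin0GUES}\ref{item:swv0exp}). No gaps; your added remark that both trajectories must share the same switching signal is precisely the point the paper's one-line proof relies on.
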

\begin{proof}
  Let $\delta(t) = \phi(t,x_o)-\phi(t,y_o)$. Then, $\dot\delta(t) = A(\mu^*(t)) \delta(t)$ and apply Theorem~\ref{thm:Vin0GUES}\ref{item:swv0exp}).
\end{proof}
Theorem~\ref{thm:Vin0GUES}\ref{item:swv0exp}) establishes the global exponential stability of the switched system (\ref{eq:avgVin0}), uniformly over switching signals of PWM type, provided the PWM period $T$ is suitably limited. Since neither $A(-0.5)=A_{\II}$ nor $A(0.5)=A_{\I}$ are Hurwitz, then the exponential stability of the switched system (\ref{eq:avgVin0}) is not possible under arbitrary switching, nor by imposing only a minimum dwell-time on each mode \cite{linant_tac09,showir_siamrev07}. By Lemma~\ref{lem:cqlf}\ref{item:swdtnexp}), it follows that imposing a \emph{maximum} dwell-time for Mode II is essential in order to ensure the exponential stability.
% MORE ON MODE I !!!

The switched system (\ref{eq:avgVin0}) has a natural quadratic Lyapunov function given by the circuit's stored energy: $V(x) = x'Px$ with $P$ as in (\ref{eq:P}). The derivative of this energy function along the trajectories of (\ref{eq:avgVin0}) has the form $-x'C'Cx$, where
\begin{align*}
  A_{\I}'P+PA_{\I} &= A_{\II}'P+PA_{\II} = -C'C,\\
  C &=
  \begin{bmatrix}
    0 & 0 & 0 & R^{-1/2}
  \end{bmatrix},
\end{align*}
and is hence only negative semidefinite. Even though several existing results extend LaSalle's Invariance Principle to switched systems \cite{hespan_tac04,bacmaz_scl05,chewan_tac08}, none of these can be employed to ascertain the uniform asymptotic or exponential stability of (\ref{eq:avgVin0}) by means of such natural Lyapunov function. The main difficulty is that neither $(A_{\I},C)$ nor $(A_{\II},C)$ are detectable, as can be easily checked. % (if either of these were observable, say $(A_{\I},C)$, then a positive definite $P_{\I}$ could be constructed satisfying $A_{\I}'P_{\I}+P_{\I}A_{\I} < 0$, contradicting the fact that $A_{\I}$ is not Hurwitz). 
Theorem~8 of \cite{hespan_tac04} relaxes this observability assumption. When applied to system (\ref{eq:avgVin0}) employing the natural quadratic Lyapunov function $V(x)$, Theorem~8 of \cite{hespan_tac04} establishes that for a sufficiently large type of switching signals (identified as $\Sb_{p\text{-dwell}}$ in \cite{hespan_tac04}), the trajectories of (\ref{eq:avgVin0}) will converge to the smallest subspace $\mathcal{M}$ that is invariant under both $A_{\I}$ and $A_{\II}$, and contains both the unobservable subspaces of $(A_{\I},C)$ and $(A_{\II},C)$. The problem in the current case is that $\mathcal{M}=\R^4$, which coincides with the whole state space. In view of the aforementioned difficulties, one main feature of our proof of Theorem~\ref{thm:Vin0GUES} is that all stability results are established by exploiting the natural Lyapunov function $V(x)=x'Px$.

Analogously to Theorem~\ref{thm:boundedness} in Section~\ref{sec:stab-results}, Theorem~\ref{thm:Vin0GUES}\ref{item:swiss}) establishes that the perturbed switched system (\ref{eq:avgpert}) is ISS with respect to the perturbation input $u$, having linear input-to-state gain and exponential decay, uniformly over switching signals in $\PWM(T,\epsilon)$.
Theorem~\ref{thm:Vin0GUES}\ref{item:swbnd}) establishes that all trajectories of the inverter circuit remain bounded, and ultimately bounded uniformly over the switching signals considered. 
Finally, Corollary~\ref{cor:swtrstab} shows that the difference between trajectories corresponding to different initial conditions but the same switching signal will exponentially decrease to zero, and that the exponential convergence rate can be uniform over all the switching signals considered. 
As previously mentioned in the case of the averaged model, this result gives theoretical justification to the inverter operation strategy proposed in \cite{5759746}.

\section{Conclusions}
\label{sec:conclusions}

We have provided large-signal stability results for both the standard averaged model and the switched model of the semi-quasi-Z-source inverter operating in continuous conduction mode. 
Specifically, we have established that the state trajectories are bounded and ultimately bounded with exponential decay rate, uniformly over all duty cycle evolutions provided reasonable operating conditions are ensured. 
Our stability results hold for every possible value of the circuit inductors, capacitors and linear resistive load, and can be used to show that all trajectories corresponding to the same duty cycle evolution but different initial conditions will converge to the same steady-state trajectory. 
This result gives theoretical justification to the inverter operation strategy proposed in \cite{5759746}, in the sense that under reasonable restrictions on the PWM signal, the state trajectory will converge exponentially to the steady state one from any initial condition. 
Our stability results are derived by exploiting the natural energy function of the circuit, although the quadratic term in its derivative along the system trajectories is only negative semidefinite.

Future work will be focused on establishing similar stability results for more general, including nonlinear, load types, and on obtaining less conservative transient and ultimate bounds.

\appendix

\section*{Proof of Theorem~\ref{thm:boundedness}}

We require the following lemma, whose proof is straightforward and hence omitted.
% \begin{lem}
%   \label{lem:extremes}
%   Let $\tilde{P} = \tilde{P}' > 0$ and consider $A(\mu)$ as in (\ref{eq:Amu}). Then,
%   \begin{equation}
%     \label{eq:Atodos}
%     A(\mu)'\tilde{P} + \tilde{P}A(\mu) < 0\quad \text{for all } \mu \in [-\bar\mu,\bar\mu].
%   \end{equation}
%   if and only if
%   \begin{equation}
%     \label{extremos}
%     \begin{aligned}
%       A(\bar\mu)'\tilde{P} + \tilde{P}A(\bar\mu) &
%       < 0,\\
%       A(-\bar\mu)'\tilde{P} + \tilde{P}A(-\bar\mu) &
%       < 0.
%     \end{aligned}
%   \end{equation}
% \end{lem}
%
\begin{lem}
  \label{lem:HkerQ}
  Let $\bar A$ be Hurwitz and suppose that there exists $P=P' > 0$ such that:
  \begin{equation}
    \label{condicionP}
    \bar A'P + P \bar A = -Q \leq 0.
  \end{equation}
  Then, there exists $\tilde{P}=\tilde{P}' > 0$ such that $\bar A'\tilde{P} + \tilde{P} \bar A < 0$
  % \begin{equation}
  %   \label{matriztilde}
  %   \begin{aligned}
  %     \bar A'\tilde{P} + \tilde{P} \bar A&
  %     < 0% \\
  %     % A_{2}'\tilde{P} + \tilde{P}A_{2}&
  %     % < 0\\
  %   \end{aligned}
  % \end{equation}
  if and only if there exists $H=H'$ satisfying:
  \begin{align}
    %\label{ifonlyifT1}
    %X(\bar A) & < 0,\quad\text{where} \\
    % X(A_{2}) & < 0\\
    \label{eq:XSQ}
    X(\bar A) &:= S_{Q}'(\bar A'H+H \bar A)S_{Q} < 0,
  \end{align}
where the columns of $S_Q$ constitute an orthonormal basis for the null space of $Q$, i.e. $S_Q \in \R^{n\times r}$, $S_Q'S_Q = \id$, $QS_Q=0$, and $r$ is the greatest such integer. If \eqref{eq:XSQ} holds, then every $\tilde{P}$ of the form $\tilde{P} = P + \xi H$ with $\xi > 0$ sufficiently small will satisfy $\bar A'\tilde{P} + \tilde{P} \bar A < 0$.
\end{lem}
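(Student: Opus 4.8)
The plan is to prove the two implications of the equivalence separately, handling the ``only if'' direction by a direct algebraic construction and the ``if'' direction by a perturbation argument that simultaneously yields the explicit formula $\tilde P = P + \xi H$ asserted in the last sentence. Throughout I would work with the auxiliary symmetric matrix $W := \bar A'H + H\bar A$, and I would exploit two structural facts: that $Q = Q' \ge 0$ (from \eqref{condicionP} with $P>0$), and that $QS_Q = 0$ with $S_Q'S_Q = \id$, so that the columns of $S_Q$ form an orthonormal basis of $\mathrm{null}(Q)$ and $\R^n = \mathrm{null}(Q) \oplus \mathrm{range}(Q)$ is an orthogonal splitting.

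For the ``only if'' direction, suppose $\tilde P = \tilde P' > 0$ satisfies $\bar A'\tilde P + \tilde P\bar A = -\tilde Q$ with $\tilde Q > 0$, and set $H := \tilde P - P$. Subtracting \eqref{condicionP} from this Lyapunov equation gives $W = \bar A'H + H\bar A = -\tilde Q + Q$. Since $QS_Q = 0$, left- and right-multiplying by $S_Q$ annihilates the $Q$ term and leaves $S_Q'WS_Q = -S_Q'\tilde Q S_Q$, which is negative definite because $\tilde Q > 0$ and $S_Q$ has full column rank. Hence this $H$ satisfies \eqref{eq:XSQ}, completing the direction.

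For the ``if'' direction, assume $H = H'$ satisfies \eqref{eq:XSQ} and set $\tilde P_\xi := P + \xi H$. Using \eqref{condicionP} one computes $\bar A'\tilde P_\xi + \tilde P_\xi \bar A = -Q + \xi W$, so the goal is to show that this matrix is negative definite and $\tilde P_\xi > 0$ for all sufficiently small $\xi > 0$. Positive definiteness of $\tilde P_\xi$ follows from $P > 0$ together with continuity of eigenvalues in $\xi$. For the definiteness of $-Q + \xi W$ I would argue on the compact unit sphere $\{v : \|v\| = 1\}$ using the orthogonal splitting above: for $v \in \mathrm{null}(Q)$ we have $v = S_Q c$ with $Qv = 0$, so $v'(-Q + \xi W)v = \xi\, c' X(\bar A)\, c < 0$ by \eqref{eq:XSQ}, while for $v \notin \mathrm{null}(Q)$ the term $-v'Qv$ is strictly negative and should dominate $\xi\, v'Wv$ once $\xi$ is small.

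The step I expect to require the most care is making this domination uniform over the sphere, i.e. extracting a single threshold $\bar\xi$ that works for all directions at once. I would phrase it as a compactness/contradiction argument: were no such threshold to exist, there would be a sequence $\xi_n \downarrow 0$ and unit vectors $v_n$ with $v_n'(-Q + \xi_n W)v_n \ge 0$, hence $0 \le v_n'Qv_n \le \xi_n v_n'Wv_n \le \xi_n\|W\| \to 0$; passing to a convergent subsequence $v_n \to v^*$ with $\|v^*\| = 1$ forces $v^{*\prime}Qv^* = 0$, so $v^* \in \mathrm{null}(Q)$ because $Q \ge 0$. Then \eqref{eq:XSQ} gives $v^{*\prime}Wv^* < 0$, and by continuity $v_n'Wv_n < 0$ for large $n$, which makes $v_n'(-Q + \xi_n W)v_n < 0$, contradicting the choice of $v_n$. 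Choosing any $\xi > 0$ below the resulting threshold (and small enough that $\tilde P_\xi > 0$) establishes both $\tilde P_\xi > 0$ and $\bar A'\tilde P_\xi + \tilde P_\xi \bar A < 0$, proving the ``if'' direction and the final constructive claim simultaneously. I would remark that the Hurwitz hypothesis on $\bar A$ is what makes the hypotheses nonvacuous (ensuring such $P$ and $\tilde P$ can exist), but that neither computation invokes it beyond this.
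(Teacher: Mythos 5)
The paper itself omits the proof of this lemma, stating only that it is ``straightforward,'' so there is no in-paper argument to compare against; your proof is correct and is the natural argument the authors presumably had in mind. Both directions check out: the ``only if'' construction $H=\tilde P - P$ works because $QS_Q=0$ kills the $Q$ term and $S_Q'\tilde Q S_Q>0$ by full column rank, and your compactness/contradiction step (using that $Q\ge 0$ forces $v^{*\prime}Qv^*=0 \Rightarrow Qv^*=0$, hence $v^*=S_Qc^*$ and $v^{*\prime}Wv^*<0$ by \eqref{eq:XSQ}) correctly extracts a uniform threshold $\bar\xi$ so that $-Q+\xi W<0$ and $P+\xi H>0$ for all $0<\xi<\bar\xi$, which simultaneously proves the equivalence and the lemma's final constructive claim.
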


\ref{item:defneg}) First, note that $A(\mu)$ is Hurwitz for all $\mu\in\setU$. The proof is based on showing that $X(A(\mu)) < 0$ for all $\mu\in\setU$ for the given matrix $H$. Note that for $\bar A = A(\mu)$, (\ref{condicionP}) is satisfied with $P$ as in (\ref{eq:P}) for all $\mu\in\setU$. We have $\ker Q = \img S_Q$, $S_Q = [e_1, e_2, e_3]$. Let $X_i$ denote the principal $i$-th order determinant of $X$, so that $X<0$ if and only if $X_i < 0$ for odd $i$ and $X_i > 0$ for even $i$. Direct computation gives
\begin{equation*}
  X(A(\mu))_1 = \frac{h_{13} (2\mu + 1)}{C_1} < 0, 
\end{equation*}
for all $\mu\in\setU$, where the inequality follows from (\ref{eq:Hentries1}). Also,
\begin{align*}
  X(A(\mu))_2 &= - \frac{b}{C_1} - \frac{2 X(A(\mu))_1 h_{24}}{C_2},\\
  b &= \frac{(h_{13} - h_{23})^2 \mu^2 + h_{13}h_{23}}{C_1},
\end{align*}
and $X(A(\mu))_2 > 0$ for all $\mu\in\setU$ follows from (\ref{eq:Hentries3}). Finally,
\begin{align*}
  X(A(\mu))_3 &=\det(X(A(\mu))) = -c\,X(A(\mu))_2 \\%\left(\frac{2X(A(\mu))_1 h_{24}}{C_2} + \frac{b}{C_1}\right) c,\\
  c &= \frac{L_2 h_{13} (2\mu+1) + L_1 h_{23} (2\mu-1)}{L_1 L_2}.
\end{align*}
We have $c>0$ for all $\mu\in\setU$ from (\ref{eq:Hentries2}), and hence $X(A(\mu))_3 > 0$ for all $\mu\in\setU$.

By application of Lemma~\ref{lem:HkerQ}, we know that there exist $\xi_1,\xi_2 > 0$ such that $A(\bar\mu)'\tilde P_{\xi} + \tilde P_{\xi} A(\bar\mu) < 0$ for all $0 < \xi < \xi_1$ and $A(-\bar\mu)'\tilde P_{\xi} + \tilde P_{\xi} A(-\bar\mu) < 0$ for all $0 < \xi < \xi_2$. Since $A(\mu)$ is a convex combination of $A(-\bar\mu)$ and $A(\bar\mu)$, it follows that $A(\mu)'\tilde P_{\xi} + \tilde P_{\xi} A(\mu) < 0$ for all $\mu\in\setU$ and all $0 < \xi < \min\{\xi_1,\xi_2\}$.

\ref{item:iss}) Let $\xi>0$ be such that $\tilde P_\xi > 0$ and (\ref{eq:VdotVin0}) holds for all $\mu\in\setU$. Along trajectories of (\ref{eq:avgpert}) we have that for all $\mu\in\setU$,
\begin{equation*}
  \dot V_\xi(x,\mu,u) \le -\alpha(\xi) \|x\|^2 + 2 \|\tilde P_\xi B\| \|x\| \|u\|,
\end{equation*}
and the proof follows from standard ISS arguments \cite{khalil02}.

\ref{item:avgub}) Just apply Theorem~\ref{thm:boundedness}\ref{item:iss}) with $u(t)=\mu(t)$. \hfill\QED

\section*{Proof of Lemma~\ref{lem:cqlf}}

%\begin{proof}(Sketch)
  We have
  \begin{align}
    A_{\I}' P + P A_{\I} = A_{\II}' P + P A_{\II} = -Q \le 0,
  \end{align}
  with
  \begin{equation}
    \label{eq:Q}
    Q = \diag\left(0,0,0,\frac{1}{R}\right).
  \end{equation}
  Consequently, for every $\epsilon \ge 0$ we can prove that
  \begin{equation}
    \label{eq:MdeltaV}
    M_\epsilon'PM_\epsilon - P \le 0,\quad\quad (M_\epsilon^2)' P M_\epsilon^2 - P \le 0,
  \end{equation}
  and
  \begin{multline}
    \label{eq:le0p}
    x'(M_\epsilon'PM_\epsilon-P)x < 0\quad\Longrightarrow\\
    x'((M_\epsilon^2)'PM_\epsilon^2 - P)x < 0
  \end{multline}

  \ref{item:swdtcqlf}) Let $x_o$ satisfy
  \begin{equation}
    \label{eq:xkerQ}
    x_o'(M_\epsilon'PM_\epsilon-P)x_o = 0.
  \end{equation}
  The vector $x_o$ satisfies (\ref{eq:xkerQ}) if and only if
  % \begin{align*}
  %   e^{A_{\I} t}x_o &\in \ker Q,\quad\text{for all }t\in[0,t_{\I}),\\
  %   e^{A_{\II}t} e^{A_{\I} t_{\I}}x_o &\in \ker Q,\quad \text{for all }t\in[0,t_{\II})\quad\text{and}\\
  %   e^{A_{\I} t} e^{A_{\II}t_{\II}} e^{A_{\I} t_{\I}}x_o &\in \ker Q,\quad \text{for all }t\in[0,\epsilon).
  % \end{align*}
  % This happens if and only if
  \begin{align}
    \label{eq:invI}
    e^{A_{\I} t}x_o &\in \Sb_{\I},\quad\text{for all }t\in[0,t_{\I}),\\
    \label{eq:invII}
    e^{A_{\II}t} e^{A_{\I} t_{\I}}x_o &\in \Sb_{\II},\quad \text{for all }t\in[0,t_{\II})\quad\text{and}\\
    \label{eq:forcontr}
    e^{A_{\I} t} e^{A_{\II}t_{\II}} e^{A_{\I} t_{\I}}x_o &\in \Sb_{\I},\quad \text{for all }t\in[0,\epsilon).
  \end{align}
  where $\Sb_{\I} = \spn\{e_1\}$ and $\Sb_{\II}=\spn\{e_1,e_3\}$ are the largest subspaces invariant under $A_{\I}$ or $A_{\II}$, respectively, that are contained in $\ker Q$. From (\ref{eq:ArArq})--(\ref{eq:AIAII}) and (\ref{eq:invI}), it follows that $x_o = e^{A_{\I} t}x_o = \alpha e_1$ for some $\alpha\in\R$ and for all $t\in[0,t_{\I}]$. From (\ref{eq:ArArq})--(\ref{eq:AIAII}) and (\ref{eq:invII}), then $e^{A_{\II}t_{\II}}e^{A_{\I} t_{\I}}x_o = \alpha [\cos(\omega t_{\II}) e_1 + \sqrt{L_1/C_1} \sin(\omega t_{\II})e_3]$, with $\omega = 1/\sqrt{L_1C_1}$. Since $t_{\II} < \pi \sqrt{L_1C_1}$, then $\sin(\omega t_{\II}) \neq 0$. Hence, if $\epsilon>0$, from (\ref{eq:forcontr}) we must have $x_1 := M_0 x_o = e^{A_{\II}t_{\II}}e^{A_{\I} t_{\I}}x_o \in \Sb_{\I}$, which implies that $x_o=0$. Therefore, if $\epsilon>0$, (\ref{eq:xkerQ}) implies that $x_o=0$ and (\ref{eq:Me}) is established.

  For $\epsilon=0$, suppose that $x_o \neq 0$ satisfies (\ref{eq:xkerQ}). By the previous argument, $x_1 = M_0 x_o \notin \Sb_{\I}$ and 
  \begin{align*}
    x_1'(M_0'PM_0-P)x_1 &= x_o' ((M_0^2)' P M_0^2 - M_0PM_0) x_o\\
    &= x_o' ((M_0^2)'PM_0^2 - P) x_o < 0,
    %\label{eq:x1}
  \end{align*}
  where the latter inequality follows because, repeating the previous
  argument, $x_1'(M_0'PM_0 - P)x_1 = 0$ would imply that $x_1 \in
  \Sb_1$, a contradiction. Recalling~(\ref{eq:le0p}), then we conclude
  that $\srad(M_0^2)<1$ and hence $\srad(M_0)<1$.

  \ref{item:swdtnexp}) By (\ref{eq:MdeltaV}), then $\srad(M_\epsilon) \le 1$ for all $\epsilon\ge 0$. Take $x_o = e_1$. Then, $e^{A_{\I}t_{\I}}x_o = x_o$ and $e^{A_{\II}t_{\II}}x_o = \cos(\omega t_{\II}) e_1 + \sqrt{L_1/C_1} \sin(\omega t_{\II})e_3 = (-1)^k e_1 = (-1)^k x_o$. Therefore, $M_\epsilon x_o = (-1)^k x_o$, showing that $\lambda=(-1)^k$ is an eigenvalue of $M_\epsilon$ with $|\lambda|=1$. \hfill\QED
%\end{proof}

\section*{Proof of Theorem~\ref{thm:Vin0GUES}}

% \begin{proof}
Consider $V(x) = x'Px$. For all $x$, we have
  \begin{align}
    \label{eq:Vineq}
    \kappa_1 \|x\|^2 &\le V(x) \le \kappa_2 \|x\|^2,\\
    \label{eq:Vdot}
    \dot V(x,\mu) &= \dot V(x) = -x' Q x \le 0
  \end{align}
  with $\kappa_1,\kappa_2 > 0$ and $Q$ as in (\ref{eq:Q}) being positive semidefinite. Consider time instants of the form $t_k = kT + \epsilon/2$ for all nonnegative integers $k$. 

\ref{item:swv0exp}) We have
  \begin{align}
    \label{eq:xtItIItI}
    x(t_{k+1}) &= M(t_{\I,k},t_{\II,k}) x(t_k),\\
    \label{eq:Mk}
    M(t_{\I,k},t_{\II,k}) &:= e^{\displaystyle A_{\I}\epsilon/2} e^{\displaystyle A_{\II}t_{\II,k}} e^{\displaystyle A_{\I} t_{\I,k}},
  \end{align}
  with $t_{\II,k} \in [\epsilon, T-\epsilon]$ and $t_{I,k} = T - t_{\II,k} - \epsilon/2$  because $\mu \in \PWM(T,\epsilon)$. Since $T < \pi\sqrt{L_1C_1}$, using Lemma~\ref{lem:cqlf} it follows that
  \begin{equation}
    \label{eq:MPM}
    R(t_{\I,k},t_{\II,k}) := M(t_{\I,k},t_{\II,k})' P M(t_{\I,k},t_{\II,k}) - P < 0.
  \end{equation}
  Define
  \begin{equation}
    \label{eq:lambda}
    \kappa_3 := \sup_{
      \begin{smallmatrix}
        t_{\II} \in [\epsilon, T-\epsilon]\\
        t_{\I} = T- t_{\II} - \epsilon/2
%       t_{I} \in [\epsilon/2, \pi\sqrt{L_1C_1} - 3\epsilon/2]\\
      \end{smallmatrix}}
    \lambda_{\min}[-R(t_{\I},t_{\II})]
  \end{equation}
  By (\ref{eq:MPM}), and since $M(\cdot,\cdot)$ is continuous on its arguments and the supremum above is taken over a compact set, then $\kappa_3 > 0$. We thus have
  \begin{align}
    %\label{eq:incneg}
    \lefteqn{V(x(t_{k+1})) - V(x(t_k)) =}\notag\\
    & x(t_k)' \left[M(t_{\I,k},t_{\II,k})' P M(t_{\I,k},t_{\II,k}) - P\right] x(t_k) \notag\\
    \label{eq:incVV}
    & \le -\kappa_3 \| x(t_k) \|^2 \le -\frac{\kappa_3}{\kappa_2} V(x(t_k)),\quad\text{for all }k\ge 0.
  \end{align}
  Consequently, $V(x(t_k)) \le (1-\frac{\kappa_3}{\kappa_2})^k V(x(t_0))$. Taking (\ref{eq:Vdot}) and (\ref{eq:incVV}) into account, we can obtain the following bound
  \begin{align}
    \label{eq:boundVt}
    V(x(t)) &\le e^{r(\epsilon/2+T)} e^{-rt} V(x(0)),\quad\text{for all }t\ge 0,\\
    r &:= -\frac{\log\left(1 - \kappa_3/\kappa_2 \right)}{T} > 0,
  \end{align}
  whence, for all $t\ge 0$,
  \begin{equation}
    \label{eq:normbound}
    \|x(t)\| \le \sqrt{\frac{\kappa_2}{\kappa_1} e^{r(\epsilon/2+T)}} e^{-rt/2} \|x(0)\|.
  \end{equation}
  Therefore, we have established~(\ref{eq:uges}) with $K=\sqrt{\frac{\kappa_2}{\kappa_1} e^{r(\epsilon/2+T)}}$ and $\lambda=r/2$.
%\end{proof}

%\begin{proof}
  \ref{item:swiss})
  Let $\Phi(t;t_{\I},t_{\II},\epsilon)$ be defined as follows
  \begin{equation}
    \label{eq:Phi}
    {\scriptsize
    \Phi(t;t_{\I},t_{\II},\epsilon) :=
    \begin{cases}
      e^{\D A_{\I} t} &\text{if }0\le t < t_{\I},\\
      e^{\D A_{\II}(t-t_{\I})} e^{\D A_{\I} t_{\I}} &\text{if }t_{\I} \le t < t_{\I} + t_{\II},\\
      e^{\D A_{\I} t} e^{\D A_{\II} t_{\II}} e^{\D A_{\I} t_{\I}} &\text{if }t_{\I}+t_{\II} \le t \le t_{\I} + t_{\II} + \epsilon.
    \end{cases}}
  \end{equation}
  For all $t_k \le t \le t_{k+1}$, we have
  \begin{align}
    \label{eq:xtk}
    x(t) &= \Phi(t-t_k;t_{\I,k},t_{\II,k},\epsilon/2) x(t_k) + \Psi_k(t),\ \text{with} \\
    \label{eq:Psik}
    \Psi_k(t) &:= \int_{t_k}^{t} \Phi(t-\tau;t_{\I,k},t_{\II,k},\epsilon/2) B u(\tau) d\tau,
  \end{align}
  where $t_{\II,k} \in [\epsilon,T-\epsilon]$ and $t_{\I,k} = T- t_{\II,k} -\epsilon/2$ for all $k$. For all $t_k \le t \le t_{k+1}$, we can bound the norm of $\Psi_k(t)$ as follows
  \begin{align}
    \label{eq:normPsik}
    \|\Psi_k(t)\| %&\le \int_{t_k}^{t_{k+1}} \|\Phi(t_{k+1}-t;t_{\I,k},t_{\II,k},\epsilon/2) B u(t)\| dt\\
    &\le \bar\Phi \|B\| \bar u_k,\\
    %\intertext{with}
    \label{eq:barphi}
    \bar\Phi &:= \sup_{
      \begin{smallmatrix}
        t \in [0,T]\\
        t_{\II} \in [\epsilon, T-\epsilon]\\
        t_{\I} = T- t_{\II} - \epsilon/2
      \end{smallmatrix}} 
    \|\Phi(t;t_{\I},t_{\II},\epsilon/2)\|,\\
    \label{eq:nintuk}
    \bar u_k &:= \int_{t_k}^{t_{k+1}} \|u(t)\| dt.
  \end{align}
  Using (\ref{eq:xtk})--(\ref{eq:nintuk}), we can bound the increment of $V$ as
  \begin{multline}
    \label{eq:deltaVwu}
    V(x(t_{k+1})) - V(x(t_k))\\
%    &= x(t_k)' R(t_{\I,k},t_{\II,k}) x(t_k) + 2 x(t_k)' P \Psi_k + \Psi_k' P \Psi_k,\\
    \le -\kappa_3 \|x(t_k)\|^2 + c_1 \bar u_k \|x(t_k)\| + c_2 \bar u_k^2
  \end{multline}
  with %$R(t_{\I,k},t_{\II,k})$ as defined in (\ref{eq:Mk})--(\ref{eq:MPM}) and
  $\kappa_3$ as in (\ref{eq:Mk})--(\ref{eq:lambda}) and $c_1,c_2>0$. It follows that for every $0<\theta<1$,
  \begin{align}
    V(x(t_{k+1})) - V(x(t_k)) &\le -\kappa_3 (1-\theta) \|x(t_k)\|^2\notag\\ 
    \label{eq:Vpertni}
    &\le -\frac{\kappa_3(1-\theta)}{\kappa_2} V(x(t_k)).
  \end{align}
  whenever
  \begin{equation*}
    \|x(t_k)\| \ge \frac{c_1 + \sqrt{c_1^2 + 4\kappa_3\theta c_2}}{2\kappa_3\theta} \bar u_k =: L_1 \bar u_k. 
  \end{equation*}
  Following standard Lyapunov arguments \cite{khalil02}, we can reach the following inequality% Consequently $V(x(t_k)) \le (1-\kappa_3(1-\theta)/\kappa_2)^k V(x(t_0))$ whenever $\|x(t_k)\| \ge L_1 \bar u_k$. It follows that
  \begin{equation}
    \label{eq:normxtkb}
    \|x(t_k)\| \le c_3 \alpha^k \|x(t_0)\| + L_2 \sup_{0\le \tilde k \le k-1} \bar u_{\tilde k},
  \end{equation}
  where $0< \alpha < 1$.
  From (\ref{eq:xtk}), it follows that
  \begin{equation}
    \label{eq:nxbtk}
    \|x(t)\| \le \bar\Phi \|x(t_k)\| + \bar\Phi \|B\| \int_{t_k}^t \|u(\tau)\| d\tau 
  \end{equation}
  for all $t_k \le t \le t_{k+1}$ for all $k \ge 0$.
  Combining (\ref{eq:normxtkb})--(\ref{eq:nxbtk}),
  \begin{multline}
    \label{eq:nxtxt0}
    \|x(t)\| \le \bar\Phi c_3 \alpha^k \|x(t_0)\|\\
    + \bar\Phi L_2 \sup_{0\le \tilde k \le k-1} \bar u_{\tilde k} + \bar\Phi \|B\| \int_{t_k}^t \|u(\tau)\| d\tau,
  \end{multline}
  for all $t_k \le t \le t_{k+1}$. Define
  \begin{equation}
    \label{eq:normu}
    \|u_{[a,b]}\| := \sup_{\tau \in [a,b]} \|u(\tau)\|.
  \end{equation}
  Then,
  \begin{equation}
    \label{eq:relnorm}
    \bar u_k \le T \|u_{[t_k,t_{k+1}]}\| \le T \|u_{[t_0,t_{k+1}]}\|
  \end{equation}
  Using (\ref{eq:relnorm}) in (\ref{eq:nxtxt0}) we arrive at
  \begin{equation}
    \label{eq:nxtnu}
    \|x(t)\| \le \bar\Phi c_3 \alpha^k \|x(t_0)\| + c_5 \|u_{[t_0,t]}\|
  \end{equation}
  for all $t_k \le t \le t_{k+1}$, for all $k\ge 0$.
  Therefore,
  \begin{equation}
    \label{eq:nxtt0exp}
    \|x(t)\| \le c_6 e^{-\lambda t} \|x(t_0)\| + c_5 \|u_{[t_0,t]}\|
  \end{equation}
  with $\lambda > 0$. Since $t_0 = \epsilon/2$, it follows that
  \begin{equation}
    \label{eq:nxtnu2}
    \|x(t_0)\| \le \|e^{A_{\I} \epsilon/2}\| \|x(0)\| + \bar\Phi \|B\| \|u_{[0,\epsilon/2]}\| \epsilon/2 .
  \end{equation}
  Combining (\ref{eq:nxtt0exp})--(\ref{eq:nxtnu2}), the result follows.

  \ref{item:swbnd}) Just apply Theorem~\ref{thm:Vin0GUES}\ref{item:swiss}) with $u(t) = \mu(t)$.\hfill\QED

\bibliographystyle{IEEEtran}

\small{
\bibliography{DNL}
}
\end{document}